\DeclareSymbolFont{largesymbols}{OMX}{zplm}{m}{n} 
\numberwithin{equation}{section}
\newcolumntype{C}{>{$}c<{$}} 
\newcommand{\alg}[1]{\mathfrak{#1}} 
\newcommand{\tfunc}[2]{#1 \bigl( #2 \bigr)} 
\newcommand{\brac}[1]{\left( #1 \right)}
\newcommand{\dd}{\mathrm{d}}   
\newcommand{\ii}{\mathfrak{i}} 
\newcommand{\wun}{\mathbf{1}}  
\newcommand{\normord}[1]{\mbox{${} : #1 : {}$}} 
\newcommand{\bra}[1]{\bigl\langle #1 \bigr\rvert}
\newcommand{\ket}[1]{\bigl\lvert #1 \bigr\rangle}
\newcommand{\braket}[2]{\bigl\langle #1 \bigr\rvert \bigl. #2 \bigr\rangle}          
\newcommand{\bracket}[3]{\bigl\langle #1 \bigr\rvert #2 \bigl\lvert #3 \bigr\rangle} 
\newcommand{\affine}[1]{\widehat{#1}}
\newcommand{\AKMA}[2]{\affine{\alg{#1}} \left( #2 \right)}                  
\newcommand{\MinMod}[2]{\mathsf{M} \bigl( #1 , #2 \bigr)}                   
\newcommand{\MinIdeal}[2]{\mathsf{I} \bigl( #1 , #2 \bigr)}                  
\newcommand{\FF}[1]{\mathcal{F}_{#1}}          
\newcommand{\scr}[1]{\mathcal{Q}_{#1}}
\newcommand{\scrs}[2]{\mathcal{Q}_{#1}^{[#2]}}
\newcommand{\monsym}[1]{\mathsf{m}_{#1}}                  
\newcommand{\fmonsym}[2]{\tfunc{\mathsf{m}_{#1}}{#2}}     
\newcommand{\elsym}[1]{\mathsf{e}_{#1}}                   
\newcommand{\felsym}[2]{\tfunc{\mathsf{e}_{#1}}{#2}}
\newcommand{\powsum}[1]{\mathsf{p}_{#1}}                  
\newcommand{\fpowsum}[2]{\tfunc{\mathsf{p}_{#1}}{#2}}
\newcommand{\fjack}[3]{\tfunc{\mathsf{P}_{#1}^{#2}}{#3}}
\newcommand{\jprod}[3]{\left\langle #1 \right\rangle^{#2}_{#3}}
\newcommand{\cft}{conformal field theory}
\newcommand{\cfts}{conformal field theories}
\newcommand{\voa}{vertex operator algebra}
\newcommand{\voas}{vertex operator algebras}
\theoremstyle{plain}
\newtheorem{thm}{Theorem}
\newtheorem{prop}[thm]{Proposition}
\newtheorem{cor}[thm]{Corollary}
\begin{document}

\title{From Jack polynomials to minimal model spectra}

\author[D Ridout]{David Ridout}

\address[David Ridout]{
Department of Theoretical Physics \\
Research School of Physics and Engineering;
and
Mathematical Sciences Institute;
Australian National University \\
Acton, ACT 2600 \\
Australia
}

\email{david.ridout@anu.edu.au}

\author[S Wood]{Simon Wood}

\address[Simon Wood]{
Department of Theoretical Physics \\
Research School of Physics and Engineering;
and
Mathematical Sciences Institute;
Australian National University \\
Acton, ACT 2600 \\
Australia
}

\email{simon.wood@anu.edu.au}

\begin{abstract}
  In this note, a deep connection between free field realisations of \cfts{}
  and symmetric polynomials is presented. We give a brief introduction into
  the necessary prerequisites of both free field realisations and symmetric
  polynomials, in particular Jack symmetric polynomials. Then we combine these
  two fields to classify the irreducible representations of the minimal model \voas{} as an
  illuminating example of the power of these methods. While these results
  on the representation theory of the minimal models are all known, 
  this note exploits the full power of Jack polynomials to present significant
  simplifications of the original proofs in the literature.
\end{abstract}

\maketitle

\onehalfspacing

\section{Introduction}

Free field theories have long been of great interest to \cft. Not only are
they elegant tractable \cfts{} in their own right, but they are also a versatile
tool for realising more complicated \cfts{} and making them tractable. The purpose
of this note is to present, in a simple and familiar setting, a deep
connection between free field theories and Jack symmetric polynomials. The symmetric
polynomial methods will then be applied to the well known free field realisations
of the Virasoro minimal models. However, it is
important to stress that these methods work far more generally. We have simply
chosen to discuss applications to the minimal models for pedagogical purposes.

A different example, where Jack symmetric polynomials have recently garnered a
lot of attention, is the much
celebrated AGT conjecture \cite{AGT10}, which relates \cfts{} to the instanton
calculus of Yang-Mills theories. The appearance of symmetric polynomials is
due to the \cfts{} in question being resolved by Coulomb gas free field
theories, just as the \cfts{} in this note are. Contrary to what was initially
believed, it does not seem that Jack symmetric polynomials form the most
natural basis for understanding the AGT conjecture. Rather, a generalisation
of Jack symmetric polynomials seems to be needed \cite{MorSmiAGT14}.

The two main results discussed in this note are Theorems
\ref{sec:singvecformula} and \ref{sec:specthm}. Theorem
\ref{sec:singvecformula}, which is originally due to Mimachi and Yamada
\cite{MimSing95}, gives elegant formulae for Virasoro singular vectors in Fock
modules, while Theorem \ref{sec:specthm}, which is originally due to Wang
\cite{WangRat93}, determines the conformal highest weights of
the irreducible representations of the minimal model vertex operator algebras (also called chiral algebras).
The original proofs, impressive though they are, are rather complicated and
this note gives novel, drastically shortened and streamlined proofs by using
symmetric polynomials, their inner products and the specialisation map.
These methods also have the advantage of being
applicable in far greater generality, as is evidenced by the fact that they were developed in \cite{TsuExt13} while classifying the irreducible
representations of certain logarithmic extensions of the minimal models.
This formalism also generalises to the more involved case of admissible level $\AKMA{sl}{2}$ theories \cite{DSSL2}.

We equate the minimal models with the simple \voas{} 
obtained by taking the quotient of the universal Virasoro \voas{} by their maximal
ideals at special values of the central charge.\footnote{Universal 
  means that we assume no relations on the
  defining field \(T(z)\) other than those required by the axioms of vertex
  operator algebras.}
The representation
theory of the minimal models can thus be obtained from that of the universal Virasoro \voas. Minimal model representations are just the
universal Virasoro \voa{} representations that are annihilated by the maximal ideal.
This elegant approach to classifying the representation theory of the minimal
models seems to have first been considered by Feigin, Nakanishi and
Ooguri \cite{FeiNul91} who applied it to a subset of the
minimal models, because, in general, having full computational control over the
maximal ideal is a very hard problem. However, free field realisations and
symmetric polynomials are exactly the tools one needs to solve this problem
and Theorem \ref{sec:specthm} extends the methods of annihilating ideals
\cite{FeiNul91} to all minimal models.

This note is organised as follows. Section \ref{sec:free-boson} gives an
overview of the free boson, the simplest example of a free field theory, as
well as vertex operators and screening operators. Section
\ref{sec:symmpoly} introduces symmetric polynomials and, in particular, gives an
overview of a one-parameter family of bases called the Jack symmetric polynomials. The
properties of these Jack polynomials are what yield such explicit
computational control that the representation theory of the minimal model representations
can be classified. Section \ref{sec:symmpoly} ends with explicit formulae for
singular vectors in terms of Jack polynomials. These formulae are originally
due to Mimachi and Yamada \cite{MimSing95}, though we give the new, much
simpler proof of \cite{TsuExt13}.
In Section \ref{sec:minmodspec}, the material of Sections \ref{sec:free-boson}
and \ref{sec:symmpoly} is combined to classify the highest weights of the irreducible
minimal model representations, in a similar manner to the methods of Feigin, Nakanishi and
Ooguri \cite{FeiNul91}.  This is then used to prove the complete reducibility of the representation theory without recourse to the (perhaps less familiar) methods of Zhu \cite{ZhuAlg96}.

\subsection*{Acknowledgements}

The second author would like to thank Akihiro Tsuchiya for introducing him to
the fascinating topics of free field realisations and symmetric polynomials.
The second author would also like to thank James Lepowsky and Siddhartha Sahi
for interesting discussions and Johannes Schmude for his extraordinary efforts in
literature searching.
Both authors thank Pierre Mathieu for interesting discussions
and his strong encouragement to write this note. 
The first author's research is supported by the Australian Research Council
Discovery Project DP1093910.
The second author's research is supported by the
Australian Research Council Discovery Early Career Researcher Award DE140101825.

\section{The free boson}
\label{sec:free-boson}

The free boson chiral algebra or Heisenberg vertex algebra is generated by a single
field \(a(z)\) which satisfies the operator product expansion
\begin{align}
  a(z)a(w)\sim\frac{1}{(z-w)^2}.
\end{align}
The Fourier expansion of the field \(a(z)\) is
\begin{align}
  a(z)=\sum_{n\in\mathbb{Z}}a_n z^{-n-1},
\end{align}
thus the operator product expansion implies the following commutations relations:
\begin{align}
  [a_m,a_n]=m\delta_{m,-n}\wun.
\end{align}
The Heisenberg Lie algebra \(\alg{H}\) is the infinite dimensional Lie algebra
generated by the \(a_n\) and the central element \(\wun\). We identify the
element \(\wun\) with the unit of the universal enveloping algebra \(U(\alg{H})\) of
\(\alg{H}\) and assume that \(\wun\) acts as the identity on any
\(\alg{H}\) 
representation.\footnote{This is only a minor restriction, since a simple rescaling of the generators \(a_n\)
  allows one to have the central element act as multiplication by any
  non-zero number.} The Heisenberg Lie algebra admits a triangular decomposition
\begin{equation}
  \begin{split}
    \alg{H}&=\alg{H}_-\oplus\alg{H}_0\oplus\alg{H}_+,\\
    \alg{H}_0&=\mathbb{C}a_0\oplus\mathbb{C}\wun,
  \end{split}
  \qquad
  \begin{split}
    \alg{H}_\pm&=\bigoplus_{n\geq1}\mathbb{C} a_{\pm n},\\
    \alg{H}_{\geq}&=\alg{H}_0\oplus \alg{H}_+.
  \end{split}
\end{equation}
The Verma modules \(\FF{\lambda}\), \(\lambda\in\mathbb{C}\),
with respect to this decomposition are called Fock modules. They are generated
by a highest weight vector \(\ket{\lambda}\) on which \(\alg{H}_{\geq}\) acts by
\begin{align}
  a_n\ket{\lambda} = \lambda \delta_{n,0} \ket{\lambda},\quad n\geq 0.
\end{align}
[Throughout this note, ``kets'' \(\ket{\lambda}\) will be reserved for the 
highest weight vectors of Fock modules and \(\lambda\) will denote the
highest weight.]
The \(\FF{\lambda}\) are then induced from \(\ket{\lambda}\) by
\begin{align}
  \FF{\lambda}=U(\alg{H})\otimes_{U(\alg{H}_{\geq})}\mathbb{C}\ket{\lambda}.
\end{align}
The parameter \(\lambda\) is called the \emph{Heisenberg weight}.
As is well known, the \(\FF{\lambda}\) are all irreducible.
As a vector space, 
\begin{align}\label{eq:creatoral}
  \FF{\lambda}\cong U(\alg{H}_{-})=\mathbb{C}[a_{-1},a_{-2},\dots].
\end{align}
As a representation over itself, the Heisenberg vertex algebra is identified with \(\FF{0}\) and
the operator state correspondence is given by
\begin{equation}
  \ket{0} \longleftrightarrow \wun,\qquad a_{-1}\ket{0} \longleftrightarrow
  a(z), \qquad 
  a_{-n_1-1}\cdots a_{-n_i-1}\ket{0}\longleftrightarrow
  \normord{\dfrac{\partial^{n_1}}{n_1!}a(z)\cdots \dfrac{\partial^{n_i}}{n_i!}a(z)}.
\end{equation}

The Heisenberg vertex algebra can be endowed with the structure of a vertex operator
algebra by choosing an energy-momentum tensor. This choice is not unique;
there is a one parameter family of choices:
\begin{align}\label{eq:T}
  T(z)=\frac{1}{2}:\alpha(z)^2:+\frac{\alpha_0}{2}\partial a(z),\quad \alpha_0\in\mathbb{C}.
\end{align}
The parameter \(\alpha_0\) determines the central charge of the
energy momentum tensor:
\begin{align}
  c=1-3\alpha_0^2.
\end{align}
The coefficients of the Fourier expansion of the energy momentum tensor are, by
definition, the generators \(L_n\) of the Virasoro algebra. Formula \eqref{eq:T}
identifies the Virasoro generators with infinite sums of elements of the
universal enveloping algebra \(U(\alg{H})\) of the Heisenberg Lie algebra:
\begin{align}
  \begin{split}
    T(z)&=\sum_{n\in\mathbb{Z}} L_n z^{-n-2}=\frac{1}{2}\sum_{n,m\in
      \mathbb{Z}}:a_ma_{n-m}:z^{-n-2}
    -\sum_{n\in\mathbb{Z}}\frac{\alpha_0}{2}(n+1)a_n z^{-n-2},\\
    L_n&=\frac{1}{2}\sum_{m\in\mathbb{Z}} :a_ma_{n-m}:
    -\frac{\alpha_0}{2}(n+1)a_n.
  \end{split}
\end{align}
This identification gives an action of the Virasoro algebra on the Fock
modules \(\FF{\lambda}\). The Fock modules thus become Virasoro highest weight
representations, that is,
\begin{equation}\label{eq:heistoconfweight}
    L_n\ket{\lambda} =h_\lambda \delta_{n,0}\ket{\lambda},\quad
    n\geq0, \qquad
    h_\lambda =\frac{1}{2}\lambda(\lambda-\alpha_0).
\end{equation}
Though the Fock modules are irreducible as Heisenberg representations, they need not
be so as Virasoro representations. 

The Heisenberg weights \(\lambda\) for which the conformal
weight \(h_\lambda\) is 1 play a special role as we shall see below. These weights are
roots of the degree 2 polynomial \(h_\lambda-1\) and we denote them by
\(\alpha_+,\alpha_-\). They satisfy the relations
\begin{align}
  \alpha_\pm=\frac{\alpha_0\pm\sqrt{\alpha_0^2+8}}{2},\qquad
  \alpha_++\alpha_-=\alpha_0,\qquad \alpha_+\alpha_-=-2.
\end{align}
\begin{thm}[Feigin-Fuchs \cite{FFFock90}]
  Let
  \begin{align}\label{eq:heislattice}
    \alpha_{r,s}=\frac{1-r}{2}\alpha_++\frac{1-s}{2}\alpha_-,\quad r,s\in\mathbb{Z}.
  \end{align}
  \begin{enumerate}
  \item For \(\alpha_+^2\in\mathbb{C}^\ast\) (or equivalently for
    \(\alpha_-^2\in\mathbb{C}^\ast\)), the Fock module \(\FF{\lambda}\) is reducible as a
    Virasoro representation if \(\lambda=\alpha_{r,s}\) for some
    \(r,s\in\mathbb{Z}\), \(rs>0\).
  \item If \(\alpha_+^2\) is non-rational (or equivalently if \(\alpha_-^2\)
    is non-rational), then the Fock module \(\FF{\lambda}\) is reducible as a
    Virasoro representation if and only if \(\lambda=\alpha_{r,s}\) for some
    \(r,s\in\mathbb{Z}\), \(rs>0\).
  \item If \(\alpha_+^2\) is positive rational (or equivalently if
    \(\alpha_-^2\) is positive
    rational), then the Fock module \(\FF{\lambda}\) is reducible as a
    Virasoro representation if and only if \(\lambda=\alpha_{r,s}\) for some \(r,s\in\mathbb{Z}\).
  \end{enumerate}
\end{thm}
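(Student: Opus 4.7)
The plan is to derive the result from the Kac determinant formula for the Virasoro algebra. I equip $\FF{\lambda}$ with the contravariant bilinear form $\langle\cdot,\cdot\rangle$ defined by $\langle\ket{\lambda},\ket{\lambda}\rangle = 1$ together with the adjoint prescription $a_n^\dagger = \alpha_0 \delta_{n,0} \wun - a_{-n}$; a direct calculation confirms $L_n^\dagger = L_{-n}$, so the form is Virasoro-contravariant. Since $\FF{\lambda}$ is a Virasoro highest weight module with the same graded character as a Virasoro Verma module of weight $h_\lambda$, its reducibility as a Virasoro representation is equivalent to degeneracy of this form on some conformal weight subspace $\FF{\lambda}[n]$.

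Next I translate the classical Kac determinant formula into the Fock language. Using the relations $\alpha_+ + \alpha_- = \alpha_0$ and $\alpha_+\alpha_- = -2$, one obtains $\alpha_0 - \alpha_{r,s} = \alpha_{-r,-s}$, and combining with the invariance $h_\lambda = h_{\alpha_0 - \lambda}$ gives the factorisation $h_\lambda - h_{r,s} = \tfrac{1}{2}(\lambda - \alpha_{r,s})(\lambda - \alpha_{-r,-s})$, where $h_{r,s} := h_{\alpha_{r,s}}$. The Kac determinant formula then yields
$$
  \det \bigl( \langle \cdot, \cdot \rangle_n \bigr) = C_n \prod_{\substack{r,s \geq 1 \\ rs \leq n}} (\lambda - \alpha_{r,s})^{p(n-rs)} (\lambda - \alpha_{-r,-s})^{p(n-rs)}
$$
for some nonzero constant $C_n$. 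The union of roots appearing here is exactly $\{ \alpha_{r,s} \st r,s \in \ZZ,\ rs > 0 \}$, since $\alpha_{-r,-s}$ with $r,s \geq 1$ equals $\alpha_{r',s'}$ with $r' = -r$, $s' = -s$ and $r's' > 0$. This immediately proves part (i): the form is degenerate for every $\lambda$ in this set.

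For the converses, I examine the fibres of the map $(r,s) \mapsto \alpha_{r,s}$. An equality $\alpha_{r,s} = \alpha_{r',s'}$ reduces via $\alpha_+\alpha_- = -2$ to a rationality condition on $\alpha_+^2$. If $\alpha_+^2$ is non-rational, the map is injective, so the zero locus of the Gram determinant coincides with $\{ \alpha_{r,s} \st rs > 0 \}$, giving part (ii). If $\alpha_+^2 = 2q/p$ with $\gcd(p,q) = 1$, the fibres are the orbits of the translation $(r,s) \mapsto (r+p, s+q)$. A short integer argument then shows that every $(r,s) \in \ZZ^2$ admits an integer $m$ with $(r + pm)(s + qm) > 0$, so every $\alpha_{r,s}$ coincides with some $\alpha_{r',s'}$ satisfying $r's' > 0$ and hence appears as a root of the Gram determinant. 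This yields part (iii).

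The main obstacle is rigorously justifying the Kac determinant formula in the Fock setting, since $\FF{\lambda}$ is freely generated over $U(\alg{H}_-)$ rather than over the Virasoro negative modes. One must verify that the $L$-contravariant form, computed in the Heisenberg PBW basis, agrees up to a nonzero scalar with the Shapovalov form on the Verma module of weight $h_\lambda$ pulled back along the canonical surjection; this is a standard but not entirely trivial change-of-basis computation. Granted this, the algebraic argument above is short and transparent. Alternatively, part (i) can be proved by constructing Virasoro singular vectors explicitly, as will be done via Jack polynomials later in the paper.
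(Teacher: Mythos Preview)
The paper does not prove this theorem at all; it is simply quoted from Feigin--Fuchs \cite{FFFock90} and used as input.  So there is nothing in the paper to compare against, and your proposal should be judged on its own merits.

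Your overall strategy --- reduce reducibility of $\FF{\lambda}$ to the vanishing of the Kac determinant for $V_{h_\lambda}$, then analyse the root set $\{\alpha_{r,s}:rs>0\}$ --- is sound, and your treatment of parts~(ii) and~(iii) via injectivity/periodicity of $(r,s)\mapsto\alpha_{r,s}$ is correct and clean.  However, the bridge you build between $\FF{\lambda}$ and the Kac determinant is faulty.  The adjoint $a_n^\dagger=\alpha_0\delta_{n,0}\wun-a_{-n}$ does \emph{not} yield a bilinear form on a single Fock module: since $a_0$ acts as $\lambda$, the relation $a_0^\dagger=\alpha_0-a_0$ forces $\lambda=\alpha_0-\lambda$ or the form vanishes identically.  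What this adjoint actually produces is the non-degenerate Virasoro-contravariant pairing $\FF{\lambda}\times\FF{\alpha_0-\lambda}\to\CC$, which exhibits $\FF{\alpha_0-\lambda}$ as the contragredient of $\FF{\lambda}$ and therefore carries no reducibility information.  A quick sanity check: with your prescription one computes $\langle L_{-1}\ket{\lambda},L_{-1}\ket{\lambda}\rangle=-\lambda^2$, whereas Virasoro contravariance would demand $2h_\lambda=\lambda^2-\alpha_0\lambda$.  Your final paragraph also speaks of the ``canonical surjection'' $V_{h_\lambda}\to\FF{\lambda}$; this map need not be surjective (nor injective), which is precisely why Fock modules are subtler than Verma quotients.

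The fix is short and replaces the form argument entirely.  Let $\phi:V_{h_\lambda}\to\FF{\lambda}$ be the Virasoro map sending highest weight vector to $\ket{\lambda}$.  Both modules have the same graded character.  If $V_{h_\lambda}$ is irreducible then $\phi$ is injective, hence an isomorphism by the character equality, so $\FF{\lambda}$ is irreducible.  Conversely, if $V_{h_\lambda}$ is reducible then either $\phi$ is an isomorphism (and $\FF{\lambda}$ inherits reducibility) or $\phi$ is not an isomorphism, in which case the character equality forces $\im\phi$ to be a proper nonzero Virasoro submodule of $\FF{\lambda}$.  Either way $\FF{\lambda}$ is reducible.  Thus $\FF{\lambda}$ is reducible as a Virasoro module if and only if the Kac determinant for $V_{h_\lambda}$ vanishes at some level, and from there your factorisation $h_\lambda-h_{r,s}=\tfrac12(\lambda-\alpha_{r,s})(\lambda-\alpha_{-r,-s})$ and the subsequent case analysis go through unchanged.
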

\noindent We omit the corresponding result for negative rational \(\alpha_\pm^2\) as the
application to the minimal models does not require it.
We remark that Feigin and Fuchs also determined the precise structure of Fock modules as Virasoro representations in \cite{FFFock90}. For a comprehensive account of Virasoro representation theory, we
recommend the book by Iohara and Koga \cite{IohRep11}.

The work of Feigin and Fuchs shows that one can realise the universal Virasoro
vertex operator algebra at arbitrary central charge \(c\)
as a vertex operator subalgebra of the Heisenberg vertex
operator algebra. This free field realisation is called the \emph{Coulomb gas}
in the physics literature.

The Fock modules \(\FF{\lambda}\) with \(\lambda\neq0\) can be given a ``generalised'' vertex
algebra structure, that is, an operator state correspondence can also be
defined for the states of \(\FF{\lambda}\), though the operator product
expansions of these fields are generally not local. The operators
corresponding to the generating states \(\ket{\lambda}\in\FF{\lambda}\) are
called vertex operators in the physics literature.
These should not be confused with the fields (called chiral fields) of the \voa{}.

Before we can define vertex operators, we need to introduce an
operator \(\hat a\) whose commutation relations with the Heisenberg algebra are
\begin{align}
  [a_n,\hat a]=\delta_{n,0}\wun,\qquad [\wun,\hat a]=0.
\end{align}
The exponential of \(\hat a\) shifts weights, that is, for \(\lambda,\mu\in\mathbb{C}\),
\begin{align}
  a_0 e^{\mu\hat a}\ket{\lambda}=\mu e^{\mu\hat a}\ket{\lambda}
  +e^{\mu\hat a}a_0\ket{\lambda}=(\mu+\lambda)e^{\mu\hat a}\ket{\lambda}.
\end{align}
We identify \(e^{\mu\hat a}\ket{\lambda} =\ket{\mu+\lambda}\). Note that
\(e^{\mu\hat a}\) does not define a homomorphism of \(\alg{H}\) representations, since
it does not commute with \(\alg{H}\).

The vertex operator \(V_\lambda(z)\) corresponding to the state
\(\ket{\lambda}\) is
\begin{align}
  V_{\lambda}(z)=e^{\lambda\hat a}z^{\lambda a_0}
  \prod_{m\geq1}\exp\left(\lambda \frac{a_{-m}}{m}z^m\right)\prod_{m\geq1}\exp\left(-\lambda\frac{a_m}{m}z^{-m}\right).
\end{align}
The vertex operators are therefore linear maps
\begin{align}
  V_\lambda(z):\FF{\mu}\rightarrow \FF{\mu+\lambda}[[z,z^{-1}]]z^{\lambda\mu}.
\end{align}
The \(V_\lambda(z)\) are often defined as the
``normally ordered exponentials'' of a field
\begin{equation}
     \phi(z)=\hat{a} +a_o \log z -\sum_{n\neq0}\frac{a_n}{n}z^{-n},\quad
     V_\lambda(z)=:e^{\lambda \phi(z)}:.
 \end{equation}
Clearly \(\partial \phi(z)=a(z)\), which in turn implies that the operator product
expansions of \(\phi\) with itself and with \(a\) are
\begin{equation}
    a(z)\phi(w) \sim \frac{1}{z-w},\qquad
    \phi(z)\phi(w) \sim \log(z-w).
\end{equation}

Using these operator product expansions, one can verify that the
vertex operators \(V_\lambda\) are conformal primaries of conformal weight
\(h_\lambda\), that is, that
\begin{align}
  T(z)V_\lambda(w)\sim \frac{h_\lambda}{(z-w)^2}V_\lambda(w)+\frac{1}{z-w}\partial V_\lambda(w).
\end{align}
In particular, for \(h_{\alpha_{\pm}}=1\),
\begin{align}\label{eq:screenope}
  T(z)V_{\alpha_{\pm}}(w)\sim \frac{1}{(z-w)^2}V_{\alpha_\pm}(w)+\frac{1}{z-w}\partial V_{\alpha_{\pm}}(w)
  = \partial_w \frac{V_{\alpha_{\pm}}(w)}{z-w},
\end{align}
that is, the singular terms of these operator product expansions constitute total derivatives. 
Vertex operators with conformal weight 1 are called
\emph{screening operators} and were introduced by Dotsenko and Fateev \cite{DotScreen84}.
The conformal weight being 1 implies that the residue
\begin{align}
  \scr{\pm}=\frac{1}{2\pi\ii}\oint V_{\alpha_\pm}(w)\dd w
\end{align}
is a Virasoro homomorphism, because of
(\ref{eq:screenope}). In other words,
\begin{align}
  [T(z),\scr{\pm}]=\frac{1}{2\pi\ii}\oint T(z) V_{\alpha_\pm}(w) \dd w=0.
\end{align}
This residue is, of course, only well defined when the exponent of \(z^{\alpha_\pm a_0}\) is an integer.
Thus, for \(\alpha_\pm\mu\in\mathbb{Z}\), the residue of the vertex
operator \(V_{\alpha_\pm}(z)\) defines a Virasoro homomorphism
\begin{align}
  \scr{\pm}:\FF{\mu}\rightarrow \FF{\mu+\alpha_{\pm}}.
\end{align}

A natural question that one can ask in this context is whether these residues
can be generalised to obtain more Virasoro homomorphisms. The answer is ``yes'',
at least for suitable Heisenberg weights \(\mu\). The solution to generalising these Virasoro homomorphisms
lies in composing screening operators. The composition of \(n\) vertex
operators \(V_{\mu_i}\), \(i=1,\dots,n\), is given by
\begin{multline}\label{eq:screenprod}
  V_{\mu_1}(z_1)\cdots V_{\mu_n}(z_n)\\
  \ =e^{\hat a\sum_{i=1}^n\mu_i}\prod_{1\leq i<j\leq n}(z_i-z_j)^{\mu_i\mu_j}
  \prod_{i=1}^n z_i^{\mu_i a_0}
  \prod_{m\geq 1}\exp\left(\frac{a_{-m}}{m}\sum_{i=1}^n \mu_iz_i^m\right)
  \prod_{m\geq 1}\exp\left(-\frac{a_{m}}{m}\sum_{i=1}^n \mu_iz_i^{-m}\right).
\end{multline}
This formula is derived by using the operator product expansions above or by
using the commutation relations of the Heisenberg Lie algebra.
If we set \(\mu_i=\alpha_\pm\), \(i=1,\dots,n\), then the above formula
simplifies to
\begin{multline}
  V_{\alpha_\pm}(z_1)\cdots V_{\alpha_\pm}(z_n)\\
  \ =e^{n\alpha_\pm\hat a}\prod_{1\leq i<j\leq n}(z_i-z_j)^{\alpha_\pm^2}
  \prod_{i=1}^n z_i^{\alpha_\pm a_0}
  \prod_{m\geq 1}\exp\left(\alpha_\pm\frac{a_{-m}}{m}\sum_{i=1}^n z_i^m\right)
  \prod_{m\geq 1}\exp\left(-\alpha_\pm\frac{a_{m}}{m}\sum_{i=1}^n z_i^{-m}\right).
\end{multline}
We take the opportunity to introduce a family of symmetric polynomials, called
power sums, to simplify notation:
\begin{align}
  \fpowsum{m}{z}&=\sum_{i=1}^n z_i^m,\qquad \overline{\fpowsum{m}{z}}=\sum_{i=1}^n z_i^{-m}.
\end{align}
Up to a phase factor, which we suppress, the second factor of
\eqref{eq:screenprod} can be rewritten as
\begin{align}\label{eq:kappadef}
  \prod_{1\leq i\neq j\leq n}(z_i-z_j)^{\kappa_\pm},\quad \kappa_\pm=\frac{\alpha_\pm^2}{2}.
\end{align}
If we evaluate the product of these \(n\) screening operators on a Fock module
\(\FF{\mu}\), then the \(a_0\) generator acts by multiplication with \(\mu\)
and therefore
\begin{multline}
  V_{\alpha_\pm}(z_1)\cdots V_{\alpha_\pm}(z_n)\Big|_{\FF{\mu}}\\
  \ =e^{n\alpha_\pm\hat
    a}\prod_{1\leq i\neq j\leq n}(z_i-z_j)^{\kappa_\pm}
  \prod_{i=1}^n z_i^{\alpha_\pm \mu}
  \prod_{m\geq 1}\exp\left(\alpha_\pm\frac{a_{-m}}{m}\fpowsum{m}{z}\right)
  \prod_{m\geq 1}\exp\left(-\alpha_\pm\frac{a_{m}}{m}\overline{\fpowsum{m}{z}}\right).
\end{multline}

Let
\begin{equation}\label{eq:cyclenorm}
    c_n(\kappa_\pm) =\frac{2\pi \ii}{(n-1)!}
    \prod_{j=1}^{n-1}\frac{\Gamma(1+(j+1)\kappa_\pm)\Gamma(-j\kappa_\pm)}
    {\Gamma(\kappa_\pm+1)}.
\end{equation}
\begin{thm}[Tsuchiya-Kanie \cite{TsuScreen86}]\label{sec:tkthm}
  If \(d(d+1)\kappa_+\notin\mathbb{Z}\) and
    \(d(n-d)\kappa_+\notin\mathbb{Z}\), 
  for all integers \(d\) satisfying \(1\leq
  d\leq n-1\), then for each Heisenberg weight 
  \(\alpha_{n,k},\ k\in\mathbb{Z}\),
  there exists a cycle \(\Delta_n\) such that
  \begin{align} \label{eq:Contour+}
    \scrs{+}{n}=\frac{1}{c_n(\kappa_+)}\int_{\Delta_n} V_{\alpha_+}(z_1)\cdots V_{\alpha_+}(z_n)
    \dd z_1\cdots \dd z_n
  \end{align}
  is a non-trivial Virasoro homomorphism
  \begin{align}
    \scrs{+}{n}:\FF{\alpha_{n,k}}\rightarrow \FF{\alpha_{-n,k}}.
  \end{align}
  Likewise, if \(d(d+1)\kappa_-\notin\mathbb{Z}\) and
    \(d(n-d)\kappa_-\notin\mathbb{Z}\), for all integers \(d\) satisfying \(1\leq
  d\leq n-1\), then for each Heisenberg weight
  \(\alpha_{k,n}, k\in\mathbb{Z}\),
  there exists a cycle \(\Delta_n\) such that
  \begin{align} \label{eq:Contour-}
    \scrs{-}{n}=\frac{1}{c_n(\kappa_-)}\int_{\Delta_n} V_{\alpha_-}(z_1)\cdots V_{\alpha_-}(z_n)
    \dd z_1\cdots \dd z_n
  \end{align}
  is a non-trivial Virasoro homomorphism
  \begin{align}
    \scrs{-}{n}:\FF{\alpha_{k,n}}\rightarrow \FF{\alpha_{k,-n}}.
  \end{align}
  In particular,
  \begin{enumerate}
  \item if \(k\geq1\), then there exist vectors
    \(v\in\FF{\alpha_{n,k}}\) and \(w\in\FF{\alpha_{k,n}}\) such that
    \begin{equation}
      \scrs{+}{n}v=\ket{\alpha_{-n,k}}\in\FF{\alpha_{-n,k}},\qquad
      \scrs{-}{n}w=\ket{\alpha_{k,-n}}\in\FF{\alpha_{k,-n}},
    \end{equation}
    while \(\ket{\alpha_{n,k}}\) and \(\ket{\alpha_{k,n}}\) are annihilated by
    \(\scrs{+}{n}\) and \(\scrs{-}{n}\), respectively,
  \item if \(k\leq 0\), then
    \begin{equation}
      \scrs{+}{n}\ket{\alpha_{n,k}}\neq0\in \FF{\alpha_{-n,k}},\qquad
      \scrs{-}{n}\ket{\alpha_{k,n}}\neq0\in \FF{\alpha_{k,-n}}.
  \end{equation}
  \end{enumerate}
\end{thm}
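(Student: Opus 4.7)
The plan is to establish three independent facts: that $\scrs{\pm}{n}$ is well-defined on a suitable twisted cycle $\Delta_n$, that it commutes with the Virasoro action and has the stated source and target Fock modules, and that it behaves as claimed on the highest weight vectors.

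For the homomorphism property, the key input is that each $V_{\alpha_\pm}(z)$ has conformal weight $1$. Applying the Leibniz rule to \eqref{eq:screenope} shows that $[T(z), V_{\alpha_\pm}(z_1) \cdots V_{\alpha_\pm}(z_n)]$ is a sum of total $z_i$-derivatives. Integrating against a closed twisted cycle $\Delta_n$ in the configuration space then kills these exact forms, so the result commutes with every Virasoro mode. The source and target are read off from \eqref{eq:screenprod}: the $n$-fold product shifts the Heisenberg eigenvalue by $n\alpha_\pm$, and one checks directly from \eqref{eq:heislattice} that $\alpha_{n,k} + n\alpha_+ = \alpha_{-n,k}$ and $\alpha_{k,n} + n\alpha_- = \alpha_{k,-n}$.

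For the construction of $\Delta_n$, the product formula shows that on $\FF{\mu}$ the integrand carries the multivalued prefactor $\prod_{i \neq j}(z_i - z_j)^{\kappa_\pm} \prod_i z_i^{\alpha_\pm\mu}$, so $\Delta_n$ must represent a class in the twisted homology of the configuration space of $n$ ordered points. A natural choice is an iterated simplex (or a Pochhammer-style replacement) equipped with a fixed branch of the multivalued integrand. The two non-integrality hypotheses $d(d+1)\kappa_\pm \notin \ZZ$ and $d(n-d)\kappa_\pm \notin \ZZ$ for $1 \leq d \leq n-1$ are precisely what is required to guarantee that this twisted cycle is closed and that the iterated integrations by parts in the previous step produce no boundary contributions from $d$ points colliding or from $d$ points escaping to infinity (respectively).

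For non-triviality, the matrix element $\bracket{\alpha_{-n,k}}{\scrs{\pm}{n}}{\alpha_{n,k}}$ extracts the coefficient of $\ket{\alpha_{-n,k}}$ in the image; after substituting the product formula on $\ket{\alpha_{n,k}}$ the exponentials involving creation modes produce only lower-order contributions, and the leading term reduces to a Selberg-type integral. The normalisation $c_n(\kappa_\pm)$ in \eqref{eq:cyclenorm} is designed so that this integral evaluates to unity, which establishes the $k \leq 0$ case directly. When $k \geq 1$, a quick conformal weight calculation using \eqref{eq:heistoconfweight} gives $h_{\alpha_{n,k}} = h_{\alpha_{-n,k}} - nk < h_{\alpha_{-n,k}}$, forcing $\scrs{+}{n}\ket{\alpha_{n,k}} = 0$, since $\FF{\alpha_{-n,k}}$ contains no vectors of conformal weight below $h_{\alpha_{-n,k}}$ and $\scrs{\pm}{n}$ preserves the conformal grading; the required preimage $v$ of $\ket{\alpha_{-n,k}}$ must therefore live at level $nk$, and its existence reduces to the non-vanishing of a suitably shifted Selberg integral on that finite-dimensional subspace. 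The main obstacle is precisely this integral evaluation together with the twisted cycle construction: both are classical but delicate, and the non-integrality hypotheses are exactly what make them work.
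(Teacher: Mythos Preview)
The paper does not prove this theorem; it is quoted from Tsuchiya--Kanie \cite{TsuScreen86}, and the only commentary given is the sentence following the statement: the construction of $\Delta_n$ is ``rather subtle'' and one should think of it as ``$n$ concentric circles about $0$ that are pinched together at $1$''. Your sketch of the commutation argument (total $z_i$-derivatives from the weight-$1$ property of the screening fields) and of the role of the non-integrality hypotheses (closedness of the twisted cycle and non-vanishing of $c_n(\kappa_\pm)$) is the standard story and matches what the cited reference does. Your description of $\Delta_n$ as an iterated simplex or Pochhammer replacement is a different geometric realisation from the pinched-circles picture the paper alludes to, but both represent classes in the same twisted homology group.

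There is one genuine slip in your non-triviality argument. You claim that for $k\le 0$ the matrix element $\bracket{\alpha_{-n,k}}{\scrs{+}{n}}{\alpha_{n,k}}$ reduces to a Selberg integral normalised to $1$ by $c_n(\kappa_+)$. This is only correct for $k=0$: the conformal-weight computation you give for the $k\ge 1$ case applies equally with the sign reversed, so for $k<0$ the image $\scrs{+}{n}\ket{\alpha_{n,k}}$ sits at level $-nk>0$ in $\FF{\alpha_{-n,k}}$ and the pairing with $\bra{\alpha_{-n,k}}$ vanishes identically. What actually shows non-vanishing for $k<0$ in this paper is \thmref{sec:singvecformula}, where the image is identified with $b_{(|k|^n)}(\kappa_+^{-1})\,\rho_{2/\alpha_+}\bigl(\jack{(|k|^n)}{\kappa_+^{-1}}\bigr)\ket{\alpha_{-n,k}}$ and the coefficient $b_{(|k|^n)}(\kappa_+^{-1})$ is checked to be a product of nonzero rationals. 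So the non-triviality in item~(2) is established \emph{after} the Jack machinery is in place, not by a direct Selberg evaluation as you suggest.
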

\noindent 
The explicit construction of the cycles \(\Delta_n\) is rather subtle and we refer to
\cite{TsuScreen86} for the details. Intuitively, \(\Delta_n\) can be thought of as
\(n\) concentric circles about \(0\) that are pinched together at \(1\).

Let us try and understand the implications of Theorem \ref{sec:tkthm}
a little better.
Since \(\ket{\alpha_{n,k}}\) is a
Virasoro highest weight vector, then so is
\(\scrs{+}{n}\ket{\alpha_{n,k}}\) by virtue of
\(\scrs{+}{n}\) being a Virasoro homomorphism. Thus, whenever \(k\leq0\), the vectors
\(\scrs{+}{n}\ket{\alpha_{n,k}}\) and \(\scrs{-}{n}\ket{\alpha_{k,n}}\) 
generate Virasoro subrepresentations in
\(\FF{\alpha_{-n,k}}\) and \(\FF{\alpha_{k,-n}}\). Such Virasoro highest weight
vectors are called singular vectors.

For \(\mu_+=\alpha_{n,k},\ \mu_-=\alpha_{k,n}\),
\begin{align}
  \prod_{1\leq i\neq j\leq n}\left(z_i-z_j\right)^{\kappa_\pm}\prod_{i=1}^{n}z_i^{\alpha_\pm\mu_\pm}
  =\prod_{1\leq i\neq j\leq n}\left(1-\frac{z_i}{z_j}\right)^{\kappa_\pm}
  \prod_{i=1}^n z_i^{k-1},
\end{align}
where we have used the defining formulae \eqref{eq:heislattice} for 
\(\alpha_{n,k}, \alpha_{k,n}\) and \eqref{eq:kappadef} for \(\kappa_\pm\).
For later use we define
\begin{align}
  G_n(z;\kappa_\pm^{-1})=\prod_{1\leq i\neq j\leq n}\left(1-\frac{z_i}{z_j}\right)^{\kappa_\pm}.
\end{align}
This seemingly odd choice of \(\kappa_\pm^{-1}\) in the definition of \(G_n\) is
to make our notation in Section \ref{sec:symmpoly} conform with the
standard conventions in the symmetric polynomials literature \cite{MacSym95}.
The constant \(c_n(\kappa_\pm)\) in (\ref{eq:cyclenorm}) normalises the cycles \(\Delta_n\) 
in \eqref{eq:Contour+} and \eqref{eq:Contour-}
such that
\begin{equation}
  \frac{1}{c_n(\kappa_\pm)}\int_{\Delta_n}
  G_n(z:\kappa_\pm^{-1})
  \frac{\dd z_1\cdots \dd z_j}{z_1\cdots z_j}=1.
\end{equation}
Henceforth, we will therefore denote by \([\Delta_n]\) the homology class of the cycle
\(\Delta_n\) that has been rescaled by \(c_n(\kappa_\pm)^{-1}\), the
\(\kappa_\pm\)-dependence being left implicit,
that is,
\begin{align}
  \int_{[\Delta_n]}=\frac{1}{c_n(\kappa_\pm)}\int_{\Delta_n}.
\end{align}
The conditions on \(\kappa_\pm\) at the beginning of Theorem \ref{sec:tkthm}
ensure that \(c_n(\kappa_\pm)\neq 0\).
These conditions are met for the applications to the minimal models
in this note.
For a systematic discussion of how to
regularise \([\Delta_n]\) when these conditions are not met, see
\cite[Sections 3.2--3.4]{TsuExt13}.

By expanding the formulae for the Virasoro homomorphisms \(\scrs{\pm}{n}\) on \(\FF{\alpha_{n,k}}\) and \(\FF{\alpha_{k,n}}\), one sees that
\begin{align}\label{eq:virhomformula}
  \scrs{\pm}{n}=e^{n\alpha_\pm\hat a}\int_{[\Delta_n]}
  G_n(z;\kappa_\pm^{-1})\prod_{i=1}^n z_i^{k}
   \prod_{m\geq 1}\exp\left(\alpha_\pm\frac{a_{-m}}{m}\fpowsum{m}{z}\right)
  \prod_{m\geq 1}\exp\left(-\alpha_\pm\frac{a_{m}}{m}\overline{\fpowsum{m}{z}}\right)
  \frac{\dd z_1\cdots\dd z_n}{z_1\cdots z_n}.
\end{align}
Apart from the multivalued function \(G_n\), the integrand consists of an
infinite sum of monomials in \(U(\mathfrak{H}_-)\otimes U(\mathfrak{H}_+)\)
where the coefficients are
products of polynomials in either positive or negative powers of the \(z_i\).
It turns out that for symmetric polynomials \(f,g\), the pairing
\begin{align}\label{eq:inprod}
  \jprod{f,g}{\kappa_\pm^{-1}}{n}=
  \int_{[\Delta_n]}
  G_n(z;\kappa_\pm^{-1}) f(z_1,\dots,z_n)
  \overline{g(z_1,\dots,z_n)}\frac{\dd z_1\cdots \dd z_n}{z_1\cdots z_n},
\end{align}
where \(\overline{g(z_1,\dots,z_n)}=g(z_1^{-1},\dots,z_n^{-1})\), defines a
non-degenerate symmetric bilinear form on the ring of symmetric
polynomials in \(n\) variables. Evaluating the Virasoro homomorphism
\(\scrs{\pm}{n}\) in (\ref{eq:virhomformula})
therefore reduces to evaluating inner products of symmetric polynomials.
As we shall see, this is a well known problem with an elegant solution.

\section{Symmetric polynomials}
\label{sec:symmpoly}

For a comprehensive study of symmetric polynomials, we recommend the book by
Macdonald \cite{MacSym95}.
Let \(\Lambda_n\) be the ring of symmetric polynomials with complex
coefficients. As a commutative ring, \(\Lambda_n\) is generated
by a number of interesting sets of polynomials 
including the elementary symmetric polynomials
\begin{align}
  \felsym{i}{z}=\sum_{1\leq j_1<\cdots <j_i\leq n}z_{j_1}\cdots z_{j_i},\quad i=1,\dots,n
\end{align}
and the power sums
\begin{align}
  \fpowsum{i}{z}=\sum_{j=1}^n z_j^i,\quad i=1,\dots,n.
\end{align}
These polynomials are algebraically independent and generate \(\Lambda_n\)
freely, that is,
\begin{align}
  \Lambda_n=\mathbb{C}[\elsym{1},\dots,\elsym{n}]=\mathbb{C}[\powsum{1},\dots,\powsum{n}].
\end{align}
The ring \(\Lambda_n\) is clearly also a complex vector space and it is
natural to look for convenient bases. One such basis is constructed from the
power sums. Let \(\lambda=(\lambda_1,\dots,\lambda_k)\), \(k\geq 0\), be a
partition of an integer with largest part \(\lambda_1\leq n\) (we follow the
convention of listing the parts in weakly descending order). Then, for all such \(\lambda\), the
\begin{align}\label{eq:powsumprods}
  \fpowsum{\lambda}{z}=\fpowsum{\lambda_1}{z}\cdots \fpowsum{\lambda_k}{z}
\end{align}
are linearly independent and form a basis of \(\Lambda_n\). 
Another convenient basis of \(\Lambda_n\) is given by the monomial symmetric polynomials.
Let \(\mu=(\mu_1,\dots,\mu_n)\) be a partition of length \(\ell(\mu)\) at most \(n\) (if the
partition is shorter than \(n\) pad it with 0s at the end until it is length \(n\)).
Then, the monomial symmetric polynomials
are defined as
\begin{align}
  \fmonsym{\mu}{z}=\sum_{\tau} z_1^{\tau_1}\cdots z_n^{\tau_n},\quad
  \tau\in\{\text{all distinct permutations of }\mu\}.
\end{align}
We shall refer to these polynomials as the symmetric monomials for brevity.

As we can see from the power sums and the symmetric monomials, the set of
partitions that label basis elements must be truncated once the weight 
\(|\lambda|=\sum_i\lambda_i\) is greater than the number of
variables.
Specifically, there exist partitions with \(\lambda_1>n\), which are not allowed
for the power sums, or \(\ell(\lambda)>n\), which are not allowed for the symmetric monomials.
This is why
it is convenient to work in the limit of infinitely many variables:
\begin{align}
  \Lambda=\varprojlim_{n}\Lambda_n.
\end{align}
One can then easily recover the finite variable case by the projection
\begin{align}
    \gamma_n:\Lambda&\rightarrow \Lambda_n\\\nonumber
    z_j&\mapsto
      \begin{cases}
        z_j&1\leq j\leq n\\
        0&j>n
      \end{cases}
\end{align}
that sets to 0 all but the first \(n\) variables.
The power sums in infinitely many variables now generate \(\Lambda\) as 
their finite-variable versions did \(\Lambda_n\).  We continue to use \eqref{eq:powsumprods} 
to define \(\powsum{\lambda}\) in the infinite-variable case.
\begin{align}\label{eq:sympolyalg}
  \Lambda=\mathbb{C}[\powsum1,\powsum2,\powsum3,\dots].
\end{align}
The power sum and symmetric monomial bases of \(\Lambda\) are now labelled
by all partitions of integers without restrictions on parts or lengths:
\begin{align}
  \Lambda=\bigoplus_{\lambda}\mathbb{C} \powsum\lambda=\bigoplus_{\lambda}\mathbb{C}\monsym\lambda.
\end{align}
The projection to the finite variable case is particularly easy in the
symmetric monomial basis:
\begin{align}
    \gamma_n:\Lambda&\rightarrow \Lambda_n\\\nonumber
    \monsym\mu&\mapsto
      \begin{cases}
        \fmonsym{\mu}{z_1,\dots,z_n}&\ell(\mu)\leq n\\
        0&\text{else}
      \end{cases}.
\end{align}
Recall that, by \eqref{eq:creatoral}, the universal
enveloping algebra \(U(\alg{H}_-)\) is also a polynomial algebra in infinitely many
generators. Identifying these two algebras will be important below.

\begin{prop}\label{sec:innerprod}
  For \(f,g\in\Lambda_n\) and \(\kappa\in\mathbb{C}^\ast\) such that
  \(d/\kappa\notin\mathbb{Z}\) for all integers satisfying \(1 \leq d\leq
  n\), the bilinear form
  \begin{align}
    \jprod{f,g}{\kappa}{n}=
    \int_{[\Delta_n]}
    G_n(z;\kappa) f(z_1,\dots,z_n)
    \overline{g(z_1,\dots,z_n)}\frac{\dd z_1\cdots \dd z_n}{z_1\cdots z_n}
  \end{align}
  is
  \begin{enumerate}
  \item symmetric,
  \item non-degenerate,
  \item graded: \(\jprod{f,g}{\kappa}{n}=0\) if \(\deg f\neq \deg g\).
  \end{enumerate}
\end{prop}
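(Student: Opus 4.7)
The plan is to verify the three properties separately, with (3) gradedness the simplest, (1) symmetry a near-immediate consequence of the inversion $z_i\mapsto z_i^{-1}$, and (2) non-degeneracy the only substantive step.

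For (3), observe that $G_n(z;\kappa)$ is homogeneous of degree $0$ under the simultaneous rescaling $z_i\mapsto tz_i$, since every factor $1-z_i/z_j$ is; similarly, $f(z)$ has degree $\deg f$, $\overline{g(z)}=g(z^{-1})$ has degree $-\deg g$, and the measure $\prod_i dz_i/z_i$ is scale invariant. The normalisation convention for $[\Delta_n]$ means the integral picks out the constant term in the Laurent expansion of the integrand about $z=0$. Hence the integrand has no constant term, and the integral vanishes, unless the total degree $\deg f-\deg g$ is zero.

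For (1), I would apply the change of variables $z_i\mapsto z_i^{-1}$ inside the integral defining $\jprod{g,f}{\kappa}{n}$. This substitution exchanges $f\leftrightarrow\overline f$ and $g\leftrightarrow\overline g$, and fixes $G_n$: indeed, $\prod_{i\neq j}(1-z_j/z_i)^\kappa$ agrees with the original product after swapping the dummy indices $i\leftrightarrow j$. The measure $dz_i/z_i$ is preserved up to an overall sign $(-1)^n$, which is absorbed by a compensating orientation reversal of $[\Delta_n]$, the rescaled cycle being homologous to its image under inversion. The net effect is to transform $\jprod{g,f}{\kappa}{n}$ into $\jprod{f,g}{\kappa}{n}$, giving symmetry.

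Non-degeneracy (2) is the real content of the proposition and is where the hypothesis on $\kappa$ enters. The plan is to exhibit an explicit orthogonal basis by computing the Gram matrix in the power sum basis $\{\powsum\lambda : \lambda_1\le n\}$ of $\Lambda_n$. A Selberg-type residue computation, applied to the constant-term integral, shows that this basis is already orthogonal:
$$\jprod{\powsum\lambda,\powsum\mu}{\kappa}{n}=\delta_{\lambda,\mu}\,c_\lambda(\kappa,n),$$
with each $c_\lambda(\kappa,n)$ expressible as a product of gamma-function ratios in the variable $1/\kappa$ and the multiplicities of parts of $\lambda$. The hypothesis $d/\kappa\notin\mathbb{Z}$ for $1\le d\le n$ is precisely what prevents these gamma factors from developing zeros or poles, so each $c_\lambda(\kappa,n)$ is non-zero. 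A diagonal Gram matrix with non-zero entries is non-degenerate, as required. The principal obstacle is this residue calculation: gradedness and symmetry are essentially formal, but the orthogonality of the power sums and the explicit form of their norms rely on a careful contour evaluation that is the true technical heart of the statement.
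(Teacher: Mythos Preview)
The paper does not actually prove this proposition; it is stated as a known result (implicitly drawn from Macdonald and the Tsuchiya--Kanie construction of the cycle $[\Delta_n]$), and the text moves straight on to introducing Jack polynomials. So there is no ``paper's own proof'' to compare against, and your sketch must be assessed on its merits.

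Your arguments for (3) and (1) are reasonable outlines. For (1), the step asserting that $[\Delta_n]$ is carried to itself (up to orientation) under $z_i\mapsto z_i^{-1}$ is not automatic --- this is a genuinely multivalued integrand and $[\Delta_n]$ is a specific twisted cycle, not the product torus --- so that point would need to be extracted from the actual construction of the cycle rather than asserted.

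Your plan for (2), however, rests on a false premise: the power sums are \emph{not} orthogonal for this integral inner product. Take $n=2$, set $\beta=1/\kappa$, and write $u=z_1/z_2$. Expanding $(1-u)^{\beta}(1-u^{-1})^{\beta}$ and picking out Laurent coefficients, one finds after a short computation
\[
\jprod{\powsum{1}^{2},\,\powsum{2}}{\kappa}{2}\ \propto\ \frac{(1-\beta)\,\Gamma(1+2\beta)}{\beta^{2}(1+\beta)(2+\beta)\,\Gamma(\beta)^{2}},
\]
which vanishes only at the Schur point $\beta=1$. (For a quick sanity check: at $\beta=2$ the kernel is $6-4u-4u^{-1}+u^{2}+u^{-2}$ and the relevant combination of coefficients is $-2\neq 0$.) So the Gram matrix in the power sum basis is not diagonal, and your proposed ``Selberg-type residue computation'' cannot yield what you claim. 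You may be conflating this integral pairing with the \emph{combinatorial} Jack inner product on $\Lambda$, which is diagonal on power sums by definition; the two pairings agree only on Jack polynomials, not on power sums.

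The route the surrounding text implicitly takes is via the Jack basis: the $\jack{\lambda}{\kappa}$ are orthogonal for $\jprod{\cdot,\cdot}{\kappa}{n}$, and item~(4) of the next proposition records their norms as explicit products. Non-degeneracy then amounts to checking that none of those factors vanish or blow up, which (together with the well-definedness of $c_n$) is where the hypothesis on $\kappa$ is used. Your instinct that the hypothesis controls nonvanishing of explicit norm factors is correct; it is just the Jack basis, not the power sum basis, that diagonalises the form.
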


Proposition \ref{sec:innerprod} leads us to the basis of \(\Lambda_n\) that is
most important for our purposes, the Jack polynomials \(\fjack{\lambda}{\kappa}{z}\). These polynomials
are characterised by two properties \cite{MacSym95}:
\begin{enumerate}
\item The Jack polynomials have upper triangular expansions in the basis of
  symmetric monomials with respect to the dominance ordering of
  partitions\footnote{The dominance ordering is a partial ordering of
    partitions of equal weight defined by
    \begin{displaymath}
      \lambda\geq \mu \iff \sum_{i=1}^k\lambda_i\geq
      \sum_{i=1}^k\mu_i,\quad \forall k\geq1.
    \end{displaymath}}, that is,
  \begin{align}
    \fjack{\lambda}{\kappa}{z}=\sum_{\lambda\geq\mu}u_{\lambda,\mu}(\kappa) \fmonsym{\mu}{z},
  \end{align}
  where the \(u_{\lambda,\mu}(\kappa)\in\mathbb{C}\) and \(u_{\lambda,\lambda}(\kappa)=1\).
\item The Jack polynomials are mutually orthogonal:
  \begin{align}
    \jprod{\fjack{\lambda}{\kappa}{z},\fjack{\mu}{\kappa}{z}}{\kappa}{n}=0,\quad
    \text{if }\lambda\neq\mu.
  \end{align}
\end{enumerate}
Since the dominance ordering of partitions is only a partial ordering, trying
to determine the Jack polynomials by means of Gram-Schmidt orthogonalisation is
an overdetermined problem. Showing that they exist 
is therefore non-trivial, see \cite{MacSym95}.

We prepare some notation regarding partitions. For a partition \(\lambda\), 
let \(s=(i,j)\in\lambda\) be a box in the Young tableau of \(\lambda\), 
so that \(i=1,\dots,\ell(\lambda)\) and \(j=1,\dots,\lambda_i\). Then, the arm length,
coarm length, leg length and coleg length are defined to be
\begin{equation}
  \begin{split}
    a(s)=\lambda_i-j,\quad a^\prime(s)=j-1, \quad
    l(s)=\lambda^\prime_j-i, \quad l^\prime(s)=i-1, 
  \end{split}
\end{equation}
respectively, where \(\lambda^\prime\) is the conjugate partition of \(\lambda\), that is,
the partition for which the columns and rows of the Young tableau have been exchanged.
\begin{prop}\label{sec:Jackprops}\
  \begin{enumerate}
  \item Jack polynomials exist (for all \(n\) and in the infinite variable limit).
  \item The Jack polynomials satisfy the same projection formulae as the
    symmetric monomials:
    \begin{equation}
      \begin{split}
        \gamma_n(\fjack{\lambda}{\kappa}{z})=
          \begin{cases}
            \fjack{\lambda}{\kappa}{z_1,\dots,z_n}&\ell(\lambda)\leq n\\
            0&\mathrm{else}
          \end{cases}.
        \end{split}
    \end{equation}
  \item For either a finite or infinite number of variables \(z_i,y_j\),
    \begin{equation}
      \begin{split}
        \prod_{i,j\geq 1}(1-z_i y_j)^{-1/\kappa}&
        =\prod_{m\geq1}\exp\left(\frac{1}{\kappa}\frac{\fpowsum{m}{z}\fpowsum{m}{y}}{m}\right)
        =\sum_{\lambda}b_{\lambda}(\kappa)\fjack{\lambda}{\kappa}{z}\fjack{\lambda}{\kappa}{y},\\
        b_\lambda(\kappa)&=\prod_{s\in\lambda}\frac{a(s)\kappa+l(s)+1}{(a(s)+1)\kappa+l(s)}.
      \end{split}
    \end{equation}
  \item The norm of the mutually orthogonal Jack polynomials is
    \begin{align}
      \jprod{\fjack{\lambda}{\kappa}{z},\fjack{\lambda}{\kappa}{z}}{\kappa}{n}=
      \prod_{s\in\lambda}\frac{(a(s)+1)\kappa+l(s)}{a(s)\kappa+l(s)+1}
      \frac{n+a^\prime(s)\kappa-l^\prime(s)}{n+(a^\prime(s)+1)\kappa-l^\prime(s)-1}.
    \end{align}
  \item For \(X\in\mathbb{C}\), let \(\Xi_X:\Lambda\rightarrow \mathbb{C}\) be
    the algebra homomorphism defined, in the power
    sum basis, by
    \begin{align}
      \Xi_X(\powsum\lambda(y))=X^{\ell(\lambda)}.
    \end{align}
    The map \(\Xi_X\) is called the \emph{specialisation map}.
    Then,
    \begin{align}
      \Xi_X\left(\fjack{\lambda}{\kappa}{y}\right)=\prod_{s\in\lambda}
      \frac{X+a^\prime(s)\kappa-l^\prime(s)}{a(s)\kappa+l(s)+1}
    \end{align}
    and
    \begin{align}
      \prod_{i\geq1}(1-z_i)^{-X/\kappa}=\prod_{m\geq1}\exp\left(\frac{X}{\kappa}\frac{\fpowsum{m}{z}}{m}\right)
      =\sum_\lambda b_\lambda(\kappa)\fjack{\lambda}{\kappa}{z} \Xi_X(\fjack{\lambda}{\kappa}{y}).
    \end{align}
    We stress that while this homomorphism applies to symmetric polynomials in any
    variables, we will only be applying it to those in the \(y\) variables.
  \item Let \((m^n)=(m,\dots,m)\) be the partition consisting of \(n\)
    copies of \(m\). Then,
    \begin{align}
      \gamma_n(\fjack{(m^n)}{\kappa}{z})=\fjack{(m^n)}{\kappa}{z_1,\dots,z_n}=\fmonsym{(m^n)}{z_1,\dots,z_n}=
      \prod_{i=1}^n z_i^m.
    \end{align}
  \end{enumerate}
\end{prop}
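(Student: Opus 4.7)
The plan is to invoke the classical theory of \cite{MacSym95} for the two deepest assertions and derive the remainder from the defining triangular-plus-orthogonal characterisation. For existence (1), the standard route is to exhibit a self-adjoint Calogero--Sutherland (or Sekiguchi--Debiard) differential operator that is triangular in the monomial basis with respect to the dominance ordering and whose diagonal entries are pairwise distinct for generic \(\kappa\); the Jack polynomials then arise uniquely as its eigenvectors, and mutual orthogonality under \(\jprod{\cdot,\cdot}{\kappa}{n}\) is automatic. The norm formula (4), which is the principal technical input, rests on Selberg-type integral evaluations and must be borrowed wholesale from the literature.

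Claims (2) and (6) follow quickly from the triangular characterisation, together with the standard fact that conjugation reverses dominance, so that \(\mu\leq\lambda\) with \(|\mu|=|\lambda|\) forces \(\ell(\mu)=\mu_1'\geq\lambda_1'=\ell(\lambda)\). For (2), if \(\ell(\lambda)\leq n\) then \(\gamma_n\) preserves the leading \(\fmonsym{\lambda}{z}\) and kills only those subleading monomials with \(\ell(\mu)>n\), recovering the finite-variable Jack polynomial; if instead \(\ell(\lambda)>n\), every monomial in the expansion has \(\ell(\mu)\geq\ell(\lambda)>n\) and hence vanishes. For (6), any \(\mu\leq(m^n)\) with \(|\mu|=mn\) satisfies \(\mu_1\leq m\), and so \(mn=\sum\mu_i\leq m\ell(\mu)\) forces \(\ell(\mu)\geq n\); combined with \(\ell(\mu)\leq n\) this pins \(\mu=(m^n)\), and the triangular expansion collapses to the single leading monomial \(\prod_i z_i^m\).

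For the Cauchy identity (3), I would first rewrite \(-\log(1-z_iy_j)=\sum_{m\geq 1}(z_iy_j)^m/m\) and resum the double product to obtain the stated exponential form in the power sums. The expansion in Jack bases then follows by a reproducing-kernel argument: pairing both sides against \(\fjack{\mu}{\kappa}{y}\) using \(\jprod{\cdot,\cdot}{\kappa}{n}\) and the orthogonality from (1) forces the off-diagonal coefficients to vanish, while the diagonal coefficients are the reciprocals of the squared norms, producing the explicit \(b_\lambda(\kappa)\) once the norm formula (4) is in hand.

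Finally, part (5) is an immediate corollary of (3): applying \(\Xi_X\) to the \(y\) variables collapses each \(\fpowsum{m}{y}\) to \(X\) and turns the exponential into \(\prod_i(1-z_i)^{-X/\kappa}\). The closed-form evaluation of \(\Xi_X(\fjack{\lambda}{\kappa}{y})\) is then obtained by observing that at a positive integer \(X=n\), \(\Xi_n\) coincides with evaluation at \(y_1=\cdots=y_n=1\) of the finite-variable Jack polynomial, which is a combinatorial consequence of (2) and (4) established by induction on \(n\) via the Pieri rules; since both sides are polynomial in \(X\), analytic continuation extends the formula to arbitrary complex \(X\). The sole genuine obstacle throughout is the norm formula (4); every other assertion reduces to the defining triangularity together with standard generating-function manipulations and Young-diagram combinatorics.
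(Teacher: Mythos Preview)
The paper does not actually prove this proposition; immediately after stating it, the authors write ``See \cite{MacSym95} for proofs.'' Your sketch is therefore strictly more detailed than what the paper offers, and in spirit it follows the same route: defer the hard analytic inputs (existence via commuting differential operators, the norm formula) to Macdonald, and read off the remaining items from the triangular characterisation and standard generating-function manipulations. Your arguments for (2) and (6) are clean and correct, and the derivation of the second and third displays in (5) from (3) via the specialisation map is exactly right.

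There is, however, a small conflation in your treatment of (3). You propose to extract the coefficients \(b_\lambda(\kappa)\) by a reproducing-kernel argument using the integral inner product \(\jprod{\cdot,\cdot}{\kappa}{n}\) together with the norm formula (4). But the Cauchy kernel \(\prod_{i,j}(1-z_iy_j)^{-1/\kappa}\) is the reproducing kernel for Macdonald's \emph{combinatorial} inner product on \(\Lambda\) (the one defined by \(\langle \powsum{\lambda},\powsum{\mu}\rangle=\delta_{\lambda\mu}\,z_\lambda\kappa^{\ell(\lambda)}\)), not for the \(n\)-variable integral pairing (4). Indeed, the norm in (4) equals \(b_\lambda(\kappa)^{-1}\) times an \(n\)-dependent factor \(\prod_{s\in\lambda}\frac{n+a'(s)\kappa-l'(s)}{n+(a'(s)+1)\kappa-l'(s)-1}\), so pairing the Cauchy kernel against \(\fjack{\mu}{\kappa}{y}\) in \(\jprod{\cdot,\cdot}{\kappa}{n}\) does not return \(\fjack{\mu}{\kappa}{z}\) on the nose and hence does not yield \(b_\lambda(\kappa)\) directly. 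The standard logic runs the other way: one first establishes (3) using the combinatorial inner product (where \(b_\lambda(\kappa)^{-1}\) is exactly the squared norm), and the integral norm (4) is a separate Selberg-type evaluation. This does not affect the overall soundness of your plan, since you are in any case citing \cite{MacSym95} for both (1) and (4), but the dependency arrow between (3) and (4) should be removed.
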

\noindent
See \cite{MacSym95} for proofs.

Armed with this knowledge of Jack polynomials, we can
now explicitly evaluate the action of screening operators on Fock
modules. Recall from equations \eqref{eq:creatoral} and \eqref{eq:sympolyalg} that both 
\(U(\alg{H}_-)\) and \(\Lambda\) are polynomial algebras in an infinite number
of variables,
\begin{align}
  \mathbb{C}[a_{-1},a_{-2},\dots]=U(\alg{H}_{-})\cong \Lambda=\mathbb{C}[\powsum1,\powsum2,\dots],
\end{align}
and are therefore isomorphic. For \(\delta\in\mathbb{C}\), we define the algebra isomorphism
\begin{align}
  \rho_\delta:\Lambda &
  \rightarrow U(\alg{H}_-),\\\nonumber
  \fpowsum{n}{y} &
  \mapsto \delta a_{-n}.
\end{align}
As with the specialisation map, we will only be applying \(\delta\) to polynomials in the \(y\) variables.
\begin{thm}\label{sec:singvecformula}
  For \(k\geq0\), let \((k^n)=(k,\dots,k)\) be the partition consisting of \(n\)
  copies of \(k\). Then, the Virasoro homomorphisms
  \begin{equation}
    \scrs{+}{n}:\FF{\alpha_{n,-k}}\rightarrow\FF{\alpha_{-n,-k}},\qquad
    \scrs{-}{n}:\FF{\alpha_{-k,n}}\rightarrow\FF{\alpha_{-k,-n}}
  \end{equation}
  map the vectors \(\ket{\alpha_{n,-k}}\) and \(\ket{\alpha_{-k,n}}\) to the non-zero singular vectors
  \begin{equation}
    \begin{split}
      \scrs{+}{n}\ket{\alpha_{n,-k}}&=
      b_{(k^n)}(\kappa_+^{-1})\rho_{\frac{2}{\alpha_+}}\left(\fjack{(k^n)}{\kappa_+^{-1}}{y}\right)
      \ket{\alpha_{-n,-k}},\\
      \scrs{-}{n}\ket{\alpha_{-k,n}}&=
      b_{(k^n)}(\kappa_-^{-1})\rho_{\frac{2}{\alpha_-}}\left(\fjack{(k^n)}{\kappa_-^{-1}}{y}\right)
      \ket{\alpha_{-k,-n}}.
    \end{split}
  \end{equation}
\end{thm}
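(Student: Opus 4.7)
The plan is to start with the explicit integral formula~\eqref{eq:virhomformula} for $\scrs{+}{n}$, substituting $-k$ for $k$ so that it acts on $\FF{\alpha_{n,-k}}$. Since $a_m\ket{\alpha_{n,-k}}=0$ for every $m\geq 1$, the right-most product of exponentials in~\eqref{eq:virhomformula} acts as the identity on $\ket{\alpha_{n,-k}}$, and the prefactor $e^{n\alpha_+\hat{a}}$ shifts the highest weight vector to $\ket{\alpha_{-n,-k}}$, since $\alpha_{n,-k}+n\alpha_+=\alpha_{-n,-k}$. What remains is to evaluate the integral of $G_n(z;\kappa_+^{-1})\prod_{i=1}^n z_i^{-k}$ against the cycle $[\Delta_n]$, times the creation-side exponential $\prod_{m\geq 1}\exp(\alpha_+ a_{-m}\fpowsum{m}{z}/m)$.

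The crucial move is to express this creation-side exponential using Proposition~\ref{sec:Jackprops}(3). A short check based on $\kappa_+\cdot(2/\alpha_+)=\alpha_+$ shows that this exponential equals $\rho_{2/\alpha_+}$ applied to the Cauchy-type kernel at parameter $\kappa_+^{-1}$, so that it expands as
\begin{equation*}
\rho_{2/\alpha_+}\!\biggl(\sum_\lambda b_\lambda(\kappa_+^{-1})\,\fjack{\lambda}{\kappa_+^{-1}}{z}\,\fjack{\lambda}{\kappa_+^{-1}}{y}\biggr).
\end{equation*}
Proposition~\ref{sec:Jackprops}(6) then identifies $\prod_i z_i^{-k}$ with $\overline{\fjack{(k^n)}{\kappa_+^{-1}}{z_1,\ldots,z_n}}$, so the whole $z$-integral collapses to the inner product $\jprod{\fjack{\lambda}{\kappa_+^{-1}}{z},\fjack{(k^n)}{\kappa_+^{-1}}{z}}{\kappa_+^{-1}}{n}$ of~\eqref{eq:inprod}. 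Terms with $\ell(\lambda)>n$ vanish under projection to $n$ variables, and by orthogonality of Jack polynomials only $\lambda=(k^n)$ survives from among the rest.

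The main obstacle is then evaluating the surviving norm $\jprod{\fjack{(k^n)}{\kappa}{z},\fjack{(k^n)}{\kappa}{z}}{\kappa}{n}$ via Proposition~\ref{sec:Jackprops}(4) and verifying that it equals $1$. Substituting the rectangular values $a(s)=k-j$, $l(s)=n-i$, $a'(s)=j-1$, $l'(s)=i-1$, the product over $j=1,\ldots,k$ at each fixed row $i$ telescopes cleanly: one of the two factors reduces under $j\mapsto k-j+1$ to exactly the reciprocal of the other, leaving $1$. Assembling the pieces yields the claimed formula for $\scrs{+}{n}\ket{\alpha_{n,-k}}$; non-vanishing follows because $\fjack{(k^n)}{\kappa_+^{-1}}{y}$ is a non-zero polynomial in the $\fpowsum{m}{y}$'s, $\rho_{2/\alpha_+}$ is an algebra isomorphism, and $U(\alg{H}_-)$ acts freely on the generating vector by~\eqref{eq:creatoral}. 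The $\scrs{-}{n}$ statement follows by the symmetry $\alpha_+\leftrightarrow\alpha_-$, exchanging the roles of $\alpha_{n,-k}$ and $\alpha_{-k,n}$.
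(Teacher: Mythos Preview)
Your argument is correct and follows essentially the same route as the paper's proof: apply the integral formula~\eqref{eq:virhomformula}, kill the annihilation exponentials on $\ket{\alpha_{n,-k}}$, rewrite the creation exponential via $\rho_{2/\alpha_+}$ and the Cauchy identity of Proposition~\ref{sec:Jackprops}(3), identify $\prod_i z_i^{-k}$ with $\overline{\fjack{(k^n)}{\kappa_+^{-1}}{z}}$ via item~(6), and use Jack orthogonality to isolate $\lambda=(k^n)$.

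The only noteworthy difference is how the surviving norm is shown to equal $1$. The paper avoids Proposition~\ref{sec:Jackprops}(4) entirely: since $\fjack{(k^n)}{\kappa_+^{-1}}{z_1,\dots,z_n}=\prod_i z_i^{k}$ by item~(6), one has $\fjack{(k^n)}{\kappa_+^{-1}}{z}\,\overline{\fjack{(k^n)}{\kappa_+^{-1}}{z}}=1$, and the norm reduces to the normalisation integral $\int_{[\Delta_n]}G_n\,\prod_i \dd z_i/z_i=1$. Your telescoping computation via item~(4) reaches the same conclusion but is less direct. On non-vanishing, note that you verify the Jack factor is non-zero while the paper verifies $b_{(k^n)}(\kappa_+^{-1})\neq 0$; strictly speaking both checks are needed, so you should also observe that the numerators and denominators in $b_{(k^n)}(\kappa_+^{-1})$ are all non-zero under the standing hypotheses on $\kappa_+$.
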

\begin{proof}
  We prove the formula for \(\scr{+}\). The one for \(\scr{-}\) follows similarly.
  The proof follows by direct evaluation using the theory of Jack polynomials:
  \begin{align}
      \scrs{+}{n}\ket{\alpha_{n,-k}}
      &=\int_{[\Delta_n]}G_n(z;\kappa_+^{-1}) \prod_{i=1}^n
      z_i^{-k}\prod_{m\geq1}\exp\left(\alpha_+\frac{\fpowsum{m}{z}}{m}a_{-m}\right)
      \ket{\alpha_{-n,-k}}\frac{\dd z_1\cdots\dd z_n}{z_1\cdots z_n}\nonumber\\
      &\overset{1}{=}\jprod{\fjack{(k^n)}{\kappa_+^{-1}}{z},
      \prod_{m\geq1}\exp\left({\alpha_+\frac{\fpowsum{m}{z}}{m}a_{-m}}\right)}{\kappa_+^{-1}}{n}
      \ket{\alpha_{-n,-k}}\nonumber\\
      &=\jprod{\fjack{(k^n)}{\kappa_+^{-1}}{z},\rho_{\frac{2}{\alpha_+}}\left(\prod_{m\geq1}
        \exp\left(\kappa_+\frac{\fpowsum{m}{z}\fpowsum{m}{y}}{m}\right)\right)}{\kappa_+^{-1}}{n}
      \ket{\alpha_{-n,-k}}\nonumber\\
      &\overset{2}{=}\sum_\mu b_\mu(\kappa_+^{-1})\jprod{
      \fjack{(k^n)}{\kappa_+^{-1}}{z},\fjack{\mu}{\kappa_+^{-1}}{z}}{\kappa_+^{-1}}{n}
      \rho_{\frac{2}{\alpha_+}}\left(
        \fjack{\lambda}{\kappa_+^{-1}}{y}\right)
      \ket{\alpha_{-n,-k}}\nonumber\\
      &\overset{3}{=}b_{(k^n)}(\kappa_+^{-1})\jprod{
      \fjack{(k^n)}{\kappa_+^{-1}}{z},\fjack{(k^n)}{\kappa_+^{-1}}{z}}{\kappa_+^{-1}}{n}
    \rho_{\frac{2}{\alpha_+}}\left(
        \fjack{(k^n)}{\kappa_+^{-1}}{y}\right)
      \ket{\alpha_{-n,-k}}\nonumber\\
      &\overset{4}{=}b_{(k^n)}(\kappa_+^{-1})\rho_{\frac{2}{\alpha_+}}\left(
        \fjack{(k^n)}{\kappa_+^{-1}}{y}\right)
      \ket{\alpha_{-n,-k}}.
  \end{align}
  Here we have used item (6) of Proposition \ref{sec:Jackprops} for
  \(\overset{1}{=}\) to identify
  \begin{equation}
  \overline{\fjack{(k^n)}{\kappa_+^{-1}}{z}}=\prod_{i=1}^n z_{i}^{-k};
\end{equation}
  item (3) of Proposition \ref{sec:Jackprops} for \(\overset{2}{=}\)
  (remembering that the integration in the inner product is over the \(z\) variables);
  the orthogonality of Jack polynomials for \(\overset{3}{=}\);
  and item (6) of Proposition \ref{sec:Jackprops} to see that 
  \begin{equation}
    \fjack{(k^n)}{\kappa_+^{-1}}{z}\overline{\fjack{(k^n)}{\kappa_+^{-1}}{z}}=1\quad \Rightarrow\quad
    \jprod{\fjack{(k^n)}{\kappa_+^{-1}}{z},\fjack{(k^n)}{\kappa_+^{-1}}{z}}{\kappa_+^{-1}}{n}=1,
  \end{equation}
  which justifies \(\overset{4}{=}\). By direct evaluation of formula for
  \(b_{(k^n)}(\kappa_+^{-1})\) in item (3) of Proposition \ref{sec:Jackprops},
  one sees that \(b_{(k^n)}(\kappa_+^{-1})\) is a product of quotients of
  positive rational numbers and is therefore non-zero.
\end{proof}
\noindent
This theorem is originally due to Mimachi and Yamada \cite{MimSing95}, though the much simpler and
streamlined proof that we have presented here first appeared in \cite[Proposition 3.24]{TsuExt13}.

\section{The minimal models and their representations}
\label{sec:minmodspec}

In Section \ref{sec:free-boson}, we constructed the universal Virasoro vertex
operator algebra at central charge
\begin{align}
  c=1-3\alpha_0^2,\quad \alpha_0\in\mathbb{C},
\end{align}
as a vertex operator subalgebra of the Heisenberg vertex operator algebra. At
generic values of the central charge \(c\) (or equivalently, at generic values
of \(\alpha_0\)), the universal Virasoro vertex operator algebra is simple and
contains no non-trivial ideals. However, there is a discrete set of central
charges at which the universal Virasoro vertex operator algebra is not
simple. The minimal model vertex operator algebras are the simple vertex
operator algebras obtained, for these central charges, by taking the quotients
of the universal vertex operator algebras by their maximal ideals. Thus, the minimal model vertex
operator algebras can be realised as subquotients of Heisenberg vertex
operator algebras.

The minimal model central charges, that is, the central
charges at which the
universal Virasoro \voas{} are non-simple, are precisely
\begin{align}
  c_{p_+,p_-}=c_{p_-,p_+}=1-6\frac{(p_+-p_-)^2}{p_+p_-},
\end{align}
where \(p_+,p_-\geq2\) are coprime integers \cite{FFFock90}.
We denote the minimal model
vertex operator algebra of central charge \(c_{p_+,p_-}\) by \(\MinMod{p_+}{p_-}\).
To obtain these minimal model central charges for the Heisenberg algebra, we
set
\begin{align}
  \alpha_0=\alpha_++\alpha_-,\qquad\alpha_+=\sqrt{\frac{2p_-}{p_+}},\qquad\alpha_-=-\sqrt{\frac{2p_+}{p_-}}.
\end{align}
The parameters \(\alpha_\pm\) are precisely the Heisenberg weights which, by 
formula \eqref{eq:heistoconfweight}, correspond to conformal weight 1. 
The \(\kappa_\pm\) parameters introduced in Section \ref{sec:free-boson} are thus,
\begin{align}
  \kappa_+=\frac{\alpha_+^2}{2}=\frac{p_-}{p_+},\qquad\kappa_-=\frac{\alpha_-^2}{2}=\frac{p_+}{p_-}.
\end{align}

The ideal \(\MinIdeal{p_+}{p_-}\) of the universal Virasoro \voa{} of central charge \(c_{p_+,p_-}\) is
generated by a singular vector of 
conformal weight \((p_+-1)(p_--1)\) \cite{FFFock90}. By using the screening operator
formalism of Section \ref{sec:free-boson}, in particular Theorem
\ref{sec:singvecformula}, we can realise this singular vector using the 
screening operator \(\scr{+}(z)=V_{\alpha_+}(z)\) or \(\scr{-}(z)=V_{\alpha_-}(z)\). Writing
\begin{align}
  \scrs{\pm}{n}
  =\int_{[\Delta_n]}V_{\alpha_\pm}(z_1)\cdots V_{\alpha_\pm}(z_n)\dd
    z_1\cdots \dd z_n,
\end{align}
we deduce that the singular vector in \(\FF{0}\) which generates the ideal of the
universal Virasoro vertex operator algebra, sitting inside the Heisenberg
vertex operator algebra, is given by
\begin{equation}\label{eq:singvects}
  \begin{split}
    \scrs{+}{p_+-1}\ket{(1-p_+)\alpha_+}&=b_{((p_--1)^{p_+-1})}(\kappa_+^{-1})
    \rho_{\frac{2}{\alpha_+}}\left(\fjack{((p_--1)^{p_+-1})}{\kappa_+^{-1}}{y}\right)\ket{0},
    \\
    \scrs{-}{p_--1}\ket{(1-p_-)\alpha_-}&=b_{((p_+-1)^{p_--1})}(\kappa_-^{-1})
    \rho_{\frac{2}{\alpha_-}}\left(\fjack{((p_+-1)^{p_--1})}{\kappa_-^{-1}}{y}\right)\ket{0}.
  \end{split}
\end{equation}
The above equations are obtained directly from Theorem
\ref{sec:singvecformula}, the first by choosing \(n=p_+-1,\ k=1-p_-\) and the second by choosing
\(n=p_--1,\ k=1-p_+\).
For a given conformal weight, the Virasoro singular vectors of a Fock module
are unique up to rescaling \cite{AstStr97}, if they exist, so the two vectors in \eqref{eq:singvects} are
proportional to each other.

As a final demonstration of the power of combining the screening operator and
symmetric polynomial formalisms,
we will classify the representations of the minimal model
vertex operator algebras. Since the universal Virasoro vertex operator algebras
are subalgebras of the Heisenberg \voas{}, the Fock modules
\(\FF{\mu}\) are representations of the universal Virasoro vertex operator
algebras for any \(\mu\in\mathbb{C}\). However, the Virasoro representation generated
from \(\ket{\mu}\) can only be a representation
of \(\MinMod{p_+}{p_-}\) if each field corresponding to a vector in the ideal
\(\MinIdeal{p_+}{p_-}\) acts trivially. Moreover, any irreducible highest weight
representation of \(\MinMod{p_+}{p_-}\) must be realisable as a subquotient of
a Fock module as, for any conformal weight,
there exists a Fock module whose generating vector has that conformal weight, by \eqref{eq:heistoconfweight}.
\begin{thm}\label{sec:specthm}
  Let
  \begin{align}
    h_{r,s}=\frac{(rp_--sp_+)^2-(p_+-p_-)^2}{4p_+p_-}.
  \end{align}
  Up to isomorphism, there are exactly \(\frac{1}{2}(p_+-1)(p_--1)\)
  inequivalent irreducible \(\MinMod{p_+}{p_-}\) representations. 
  They are given by the irreducible representations of the Virasoro algebra
  generated by highest weight vectors of conformal weight
  \begin{align}
    h_{r,s},\quad 1\leq r\leq p_+-1,\ 1\leq s\leq p_--1,\ rp_-+sp_+\leq p_+p_-.
  \end{align}
\end{thm}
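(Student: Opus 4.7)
The plan is to realise all candidate irreducible $\MinMod{p_+}{p_-}$-representations as subquotients of Fock modules and then characterise when the defining ideal acts trivially. Every irreducible Virasoro highest weight representation arises in this way: by \eqref{eq:heistoconfweight}, any conformal weight $h$ equals $h_\mu$ for some Heisenberg weight $\mu$, so the irreducible Virasoro representation of weight $h$ appears as a subquotient of $\FF{\mu}$. This representation descends to $\MinMod{p_+}{p_-}$ precisely when the ideal's generator $\chi \in \FF{0}$, given explicitly by \eqref{eq:singvects}, acts as zero on it. For the irreducible highest weight representation generated by (the image of) $\ket{\mu}$, this descent condition reduces to a single scalar vanishing condition $f(\mu) = 0$, where $f(\mu)$ is the eigenvalue by which the zero-mode of the field $Y(\chi, z)$ acts on $\ket{\mu}$; the higher modes either annihilate $\ket{\mu}$ for weight reasons or are forced to vanish on the irreducible quotient by primary-field commutator identities.

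The crux is to compute $f(\mu)$ explicitly using the Jack polynomial machinery. By \thmref{sec:singvecformula}, $\chi$ is, up to a nonzero constant, $\rho_{2/\alpha_+}\bigl(\fjack{((p_--1)^{p_+-1})}{\kappa_+^{-1}}{y}\bigr)\ket{0}$. The key claim is that, under the identification $U(\alg{H}_-) \cong \Lambda$ transported through $\rho_{2/\alpha_+}$, the zero-mode functional $f(\mu)$ coincides, up to a nonzero factor, with the specialisation map $\Xi_{X(\mu)}$ of \propref{sec:Jackprops}(5), where $X(\mu)$ is an explicit affine-linear function of the Heisenberg weight. Establishing this will run parallel to the proof of \thmref{sec:singvecformula}: expand the exponentials of $a_{-m}$ arising in the action, re-express the resulting generating function in the Jack polynomial basis using \propref{sec:Jackprops}(3), and collapse the resulting sums via orthogonality to isolate a specialisation of the particular Jack polynomial labelled by $((p_--1)^{p_+-1})$. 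Granting this identification, \propref{sec:Jackprops}(5) produces
\begin{equation*}
  f(\mu) \propto \prod_{i=1}^{p_+-1} \prod_{j=1}^{p_--1} \bigl( X(\mu) + (j-1)\kappa_+^{-1} - (i-1) \bigr),
\end{equation*}
whose vanishing is equivalent to $X(\mu) = (i-1) - (j-1)\kappa_+^{-1}$ for some admissible $(i,j)$; unpacking $X(\mu)$ in Heisenberg coordinates translates this to $\mu = \alpha_{r,s}$ for some $1 \leq r \leq p_+-1$ and $1 \leq s \leq p_--1$, whereupon \eqref{eq:heistoconfweight} delivers the conformal weight $h_{r,s}$.

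For the counting, the formula for $h_{r,s}$ exhibits the symmetry $h_{r,s} = h_{p_+-r, p_--s}$, so the $(p_+-1)(p_--1)$ admissible pairs fall into $(p_+-1)(p_--1)/2$ orbits of size two under this involution, each producing a single conformal weight; the side condition $rp_-+sp_+ \leq p_+p_-$ selects one representative per orbit, and coprimality of $p_+, p_-$ rules out equality. It remains to check that each selected $h_{r,s}$ actually corresponds to a nontrivial irreducible $\MinMod{p_+}{p_-}$-module; this follows from the Feigin-Fuchs description of the Virasoro submodule structure of $\FF{\alpha_{r,s}}$, which guarantees a non-zero irreducible quotient on which the ideal acts trivially. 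The main obstacle in the whole argument is the operator identity in the middle paragraph that identifies $f(\mu)$ with a Jack polynomial specialisation: it is precisely where the Jack polynomial formalism does the heavy lifting, providing the free-field analogue (and substantial generalisation) of the annihilating-ideal approach used by Feigin, Nakanishi and Ooguri for a subset of the minimal models.
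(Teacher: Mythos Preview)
Your proposal is essentially the paper's approach: compute the zero-mode matrix element $\bracket{\mu}{\chi(w)}{\mu}$, identify it as a Jack specialisation $\Xi_{X(\mu)}$ with $X(\mu)=\alpha_-\mu$ via \propref{sec:Jackprops}(5), and read off the product $\prod_{i,j}(\mu-\alpha_{i,j})\propto\prod_{(r,s)}(h_\mu-h_{r,s})$. The one execution difference worth flagging is that the paper does not reconstruct the zero mode from the Jack expression for $\ket{\chi}$ via the state--field correspondence as your sketch suggests; it instead writes $\chi(w)$ directly as the screening integral
\[
\chi(w)=\int_{[\Delta_{p_+-1}]}\scr{+}(z_1+w)\cdots\scr{+}(z_{p_+-1}+w)\,V_{(1-p_+)\alpha_+}(w)\,\dd z_1\cdots\dd z_{p_+-1},
\]
so that all $a_{m\neq0}$ exponentials annihilate $\bra{\mu}$ and $\ket{\mu}$ and one lands immediately on the inner product $\jprod{\fjack{((p_--1)^{p_+-1})}{\kappa_+^{-1}}{z},\,\prod_i(1+z_i/w)^{\alpha_+\mu}}{\kappa_+^{-1}}{p_+-1}$, to which item~(5) applies cleanly --- this avoids the normal-ordering bookkeeping your route would entail. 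For the sufficiency direction the paper simply defers to coset constructions, quantum hamiltonian reduction, or Zhu's algebra rather than arguing, as you do, from Feigin--Fuchs structure plus primary-field commutators.
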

\begin{proof}
  We only prove that the above list of irreducible representations of the
  Virasoro algebra is an upper
  bound on the set of inequivalent irreducible \(\MinMod{p_+}{p_-}\) representations. In
  order to show that the list is saturated, one can then either construct all
  these representations by, for example, the coset construction
  \cite{GodCoset85,KacCosets88},
  by quantum hamiltonian reduction \cite{FeiDSred90},
  or use Zhu's algebra \cite{WangRat93,ZhuAlg96},
  this being the associative algebra of zero modes of the fields of the \voa{}
  acting on highest weight vectors.

  Consider the singular vector
  \begin{align}
    \ket{\chi}=\scrs{+}{p_+-1}\ket{(1-p_+)\alpha_+}\in\FF{0}
  \end{align}
  of equation (\ref{eq:singvects}).
  The corresponding field is obtained by
  integrating \(p_+-1\) vertex operators \(V_{\alpha_+}\) over \([\Delta_{p_+-1}]\) about the vertex
  operator \(V_{(1-p_+)\alpha_+}\), with \([\Delta_{p_+-1}]\) centred about
  the argument of \(V_{(1-p_+)\alpha_+}\):
  \begin{align}
    \chi(w) =\int_{[\Delta_{p_+-1}]}\scr{+}(z_1+w)\cdots\scr{+}(z_{p_+-1}+w)
      V_{(1-p_+)\alpha_+}(w)\dd z_1\cdots \dd z_{p_+-1}.
  \end{align}
  The vector \(\ket{\chi}\) is an element of the ideal \(\MinIdeal{p_+}{p_-}\), so
  the field \(\chi(w)\) must therefore act trivially on any
  \(\MinMod{p_+}{p_-}\) representation.
  Consequently,
  \begin{align}
    \bracket{\mu}{\chi(w)}{\mu}=0,
  \end{align}
  where \(\ket{\mu}\) is the highest weight vector of \(\FF{\mu}\) and
  \(\bra{\mu}\) its dual which satisfies
  \begin{equation}
    \braket{\mu}{\mu}=1,\qquad \bra{\mu}a_n=\bra{\mu}\delta_{n,0}\mu,\quad n\leq0.
  \end{equation}
  
  We can evaluate \(\bracket{\mu}{\chi(w)}{\mu}\) using the theory of
  Jack polynomials.
  Applying formula \eqref{eq:screenprod} to simplify the composition of vertex
  operators in the definition of \(\chi(w)\), we see that
  \begin{align}\label{eq:unintsingfield}
      &
      \scr{+}(z_1+w)\cdots
      \scr{+}(z_{p_+-1}+w)V_{(1-p_+)\alpha_+}(w)\nonumber\\
      &
      \mspace{150mu} =
      G_{p_+-1}(z;\kappa_+^{-1})
      \prod_{i=1}^{p_+-1} (z_i+w)^{\alpha_+ a_0}\cdot w^{(1-p_+)\alpha_+a_0}\nonumber\\
      &
      \mspace{150mu} \quad\times\prod_{m\geq 1}\exp\left(\alpha_+\frac{a_{-m}}{m}\left(\fpowsum{m}{z_1+w,\dots,z_{p_+-1}+w}+(1-p_+)w^m\right)\right)\nonumber\\
      &
      \mspace{150mu} \quad\times\prod_{m\geq 1}\exp\left(-\alpha_+\frac{a_{m}}{m}\left(\overline{\fpowsum{m}{z_1+w,\dots,z_{p_+-1}+w}}+(1-p_+)w^{-m}\right)\right).
  \end{align}
  The exponentials of Heisenberg generators \(a_m\), \(m\neq 0\), in
  \eqref{eq:unintsingfield} annihilate \(\bra{\mu}\) and \(\ket{\mu}\).
  Thus,
  \begin{align}
      \bracket{\mu}{\chi(w)}{\mu}&= \int_{[\Delta_{p_+-1}]}
      \bracket{\mu}{\scr{+}(z_1+w)\cdots \scr{+}(z_{p_+-1}+w) V_{(1-p_+)\alpha_+}(w)}{\mu}\dd
      z_1\cdots \dd z_{p_+-1}\nonumber\\
      &=\int_{[\Delta_{p_+-1}]}
      G_{p_+-1}(z;\kappa_+^{-1})\prod_{i=1}^{p_+-1}z_i^{1-p_-}\cdot\prod_{i=1}^{p_+-1}(z_i+w)^{\alpha_+\mu}\cdot
      w^{(1-p_+)\alpha_+\mu}\frac{\dd z_1\cdots z_{p_+-1}}{z_1\cdots
        z_{p_+-1}}\nonumber\\
      &=\int_{[\Delta_{p_+-1}]}
      G_{p_+-1}(z;\kappa_+^{-1})\prod_{i=1}^{p_+-1}z_i^{-(p_--1)}\cdot\prod_{i=1}^{p_+-1}\left(1+\frac{z_i}{w}\right)^{\alpha_+\mu}
      \frac{\dd z_1\cdots\dd z_{p_+-1}}{z_1\cdots z_{p_+-1}}\nonumber\\
      &=\jprod{
      \fjack{((p_--1)^{p_+-1})}{\kappa_+^{-1}}{z},\prod_{i=1}^{p_+-1}\left(1+\frac{z_i}{w}\right)^{\alpha_+\mu}
      }{\kappa_+^{-1}}{p_+-1}\nonumber\\
      &=(-w)^{-(p_+-1)(p_--1)}\ b_{((p_--1)^{p_+-1})}(\kappa_+^{-1})
      \ \Xi_{\alpha_-\mu}\left(\fjack{((p_--1)^{p_+-1})}{\kappa_+^{-1}}{y}\right)\nonumber\\
      &=(-w)^{-(p_+-1)(p_--1)}\prod_{s\in(p_--1)^{p_+-1}}
      \frac{\alpha_-\mu+a^\prime(s)/\kappa_+-l^\prime(s)}{(a(s)+1)/\kappa_++l(s)}\nonumber\\
      &=(-w)^{-(p_+-1)(p_--1)}\prod_{i=1}^{p_+-1}\prod_{j=1}^{p_--1}
      \frac{\alpha_-\mu+(j-1)/\kappa_++1-i}{(p_--j)/\kappa_++p_+-1-i},
  \end{align}
  where we have evaluated the inner product using item (5) of Proposition \ref{sec:Jackprops}.
  Clearly, the denominator of the above product is non-singular, since
  \(\kappa_+={p_-}/{p_+}\) is a positive rational number. Therefore,
  \(\bracket{\mu}{\chi(w)}{\mu}=0\) whenever
  \begin{align}
    \begin{split}
      0&=\prod_{i=1}^{p_+-1}\prod_{j=1}^{p_--1}\brac{\alpha_-\mu+(j-1)/\kappa_++1-i}
      =C \prod_{i=1}^{p_+-1}\prod_{j=1}^{p_--1}\brac{\mu-\alpha_{i,j}}, 
    \end{split}
  \end{align}
  where \(C\) is a non-zero constant.  We group the \((i,j)\)-factor with the \((p_+-i,p_--j)\)-factor:
  \begin{align}
    \left(\mu-\alpha_{i,j}\right)\cdot
    \left(\mu-\alpha_{p_+-i,p_--j}\right)
    =2 h_\mu - 2 h_{i,j}.
  \end{align}
  Thus,
  \begin{equation}
    \bracket{\mu}{\chi(w)}{\mu}=0\iff
    \prod_{(r,s)}(h_\mu-h_{r,s})=0,
  \end{equation}
  where the index \((r,s)\) runs over all \(1\leq r\leq p_+-1\) and
  \(1\leq s\leq p_--1\), with \(rp_-+sp_-<p_+p_-\).
  The above constraints imply that the conformal highest weight of an 
  \(\MinMod{p_+}{p_-}\) representation must be a root of the polynomial
  \begin{equation}\label{eq:minpoly}
    f(h)=\prod_{(r,s)}(h-h_{r,s}),
  \end{equation}
  that is, it must be equal to \(h_{r,s}\) for some \(1\leq r\leq p_+-1\) and
  \(1\leq s\leq p_--1\), with \(rp_-+sp_-<p_+p_-\). 
\end{proof}

Showing that the representation theory of \(\MinMod{p_+}{p_-}\) is completely
reducible and that it can be used to construct rational conformal field
theories requires only a little more work. 
The Virasoro Verma module of conformal weight \(h_{r,s}\), where \(1\leq r\leq p_+-1\) and
\(1\leq s\leq p_--1\), contains a maximal
subrepresentation generated by two independent singular vectors of conformal weights
\(h^\prime=h_{r,s}+rs\) and
\(h^{\prime\prime}=h_{r,s}+(p_+-r)(p_--s)\) \cite{FFFock90}. However, neither
\(h^{\prime}\) nor \(h^{\prime\prime}\) are roots of \eqref{eq:minpoly}.
So the \(\MinMod{p_+}{p_-}\) representation of conformal weight \(h_{r,s}\) must
be isomorphic to
the irreducible quotient of the　Virasoro Verma module of conformal weight \(h_{r,s}\).
This also implies that there exists no non-trivial extensions between 
irreducible representations with distinct conformal weights. In \cite[Prop.~7.5]{RidSta09}, it was shown that
the irreducible Virasoro representation of conformal weight \(h_{r,s}\) admits
no self extensions (as representations of the Virasoro algebra).
Thus, neither do the irreducible \(\MinMod{p_+}{p_-}\) representations.
This proves that irreducible \(\MinMod{p_+}{p_-}\) representations
do not admit any non-trivial extensions and that therefore the representation theory of \(\MinMod{p_+}{p_-}\) is
completely reducible.

\begin{cor}
The Virasoro minimal model \voas{} are rational, that is, they admit only a finite number of inequivalent irreducible representations and all representations are completely reducible.
\end{cor}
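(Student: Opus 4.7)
The plan is to simply assemble the ingredients already established in \thmref{sec:specthm} and in the paragraph that immediately precedes the corollary, since the corollary is in large part a repackaging of what has just been shown. First I would invoke \thmref{sec:specthm}, which provides exactly $\frac{1}{2}(p_+-1)(p_--1)$ inequivalent irreducible $\MinMod{p_+}{p_-}$ representations (the upper bound coming from the Jack polynomial calculation and saturation coming from the auxiliary constructions such as cosets, quantum hamiltonian reduction or Zhu's algebra). This immediately yields the finiteness half of rationality.

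For the complete reducibility half, I would appeal to the two extension-vanishing statements summarised just above the corollary. First, between irreducible $\MinMod{p_+}{p_-}$ representations with distinct conformal weights, the maximal submodule of the Virasoro Verma module of conformal weight $h_{r,s}$ is generated by two singular vectors, whose conformal weights $h_{r,s}+rs$ and $h_{r,s}+(p_+-r)(p_--s)$ both fail to be of the admissible form in \thmref{sec:specthm}; hence each minimal model representation is forced to coincide with the irreducible Virasoro quotient of $\VirVer{h_{r,s}}$, and no non-trivial extension can exist. Second, self-extensions of each such irreducible Virasoro representation vanish by \cite[Prop.~7.5]{RidSta09}, and this vanishing restricts to the minimal model.

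Finally, I would promote the pairwise $\Ext^{1}$-vanishing into genuine complete reducibility by induction on composition length: given a module $M$ with maximal submodule $N$, the inductive hypothesis decomposes $N$ into irreducibles, and the sequence $\ses{N}{M}{M/N}$ then splits because each $\Ext^{1}$ group between $M/N$ and a summand of $N$ is zero. The only real subtlety is to ensure that the class of representations under consideration consists of modules of finite composition length; this is the step where one must work inside the standard category of grading-restricted $\MinMod{p_+}{p_-}$-modules, on which the $L_0$-grading together with the finite admissible weight list forces every graded component to be finite dimensional and so every composition series to terminate. I do not anticipate a substantive obstacle here, as all the hard analytic input has already been supplied by the Jack polynomial machinery in \thmref{sec:specthm}.
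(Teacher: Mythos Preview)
Your proposal is correct and follows the same route as the paper: finiteness from \thmref{sec:specthm}, vanishing of extensions between distinct irreducibles from the singular-vector weights not lying in the admissible list, vanishing of self-extensions from \cite[Prop.~7.5]{RidSta09}, and hence complete reducibility. The only difference is that you make explicit the finite-composition-length induction that the paper leaves tacit in its final sentence; this is a welcome clarification rather than a departure.
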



\end{document}